\documentclass[12pt]{article}

\usepackage{amsmath}
\usepackage{amssymb}
\usepackage{amsfonts}
\usepackage{mathptmx} 
\usepackage{times}
\usepackage{graphicx}
\usepackage{color}

\setlength{\textwidth}{160truemm}
\setlength{\textheight}{235truemm}
\setlength{\oddsidemargin}{1.9mm}
\setlength{\evensidemargin}{1.6mm}
\setlength{\topmargin}{-20truemm}

\newtheorem{theorem}{Theorem}
\newtheorem{lemma}{Lemma}

\def\<{\leqslant}           
\def\>{\geqslant}           

\def\x{\times}

\def\wh{\widehat}
\def\wt{\widetilde}

\def\C{\mathbb{C}}

\def\U{{\mathbb U}}

\def\K{{\mathbb K}}

\def\bfP{\mbox{${\bf P}$}}

\def\bE{\mathbf{E}}
\def\bD{\mathbf{D}}
\def\re{\mathrm{e}}
\def\rd{\mathrm{d}}
\def\rT{\mathrm{T}}
\def\bfK{\mbox{${\bf K }$}}
\def\bfL{\mbox{${\bf L }$}}
\def\bfM{\mbox{${\bf M }$}}
\def\bfS{\mbox{${\bf S }$}}
\def\bfT{\mbox{${\bf T }$}}

\def\bfA{\mbox{${\bf A}$}}
\def\bfH{\mbox{${\bf H}$}}

\def\cov{\mbox{${\bf cov}$}}

\def\L{{\mathbb L}}

\def\cZ{\mathcal{Z}}

\def\G{\mathbb{G}}
\def\E{\mathbb{E}}

\def\bfQ{\mbox{${\bf Q}$}}

\def\R{\mathbb{R}}

\def\S{\mathbb{S}}

\def\Z{\mathbb{Z}}

\def\phi{\varphi}

\def\esssup{\mathop{\rm ess\, sup}}

\def\Tr{\mathrm{Tr}}

\def\sn{{|\!|\!|}}

\input{epsf}

\begin{document}
\title{\LARGE\bf Anisotropy-based optimal filtering in linear discrete time
        invariant systems\footnote{This paper is a slightly edited version of the research report: I.Vladimirov, ``Anisotropy-based optimal filtering in linear discrete time
        invariant systems'', Centre for Applied Dynamical Systems, Mathematical Analysis and Probability, The University of Queensland, Brisbane, Australia,  CADSMAP Research Report 01--03, November 2001.  None of the original results have been removed,
nor have new results been added. The original work was supported by the Australian Research Council grant A~1002~7063 and was carried out while the author was with the Department of Mathematics, The University of Queensland, Brisbane, Australia.}
}
\author{Igor G. Vladimirov\thanks{UNSW Canberra, Campbell, ACT 2600, Canberra, Australia. E-mail: igor.g.vladimirov@gmail.com.
}}
\date{}

\maketitle
\begin{abstract}
This paper is concerned with a problem of robust filtering
for a finite-dimensional linear discrete time invariant system with
two output signals, one of  which is directly observed while the
other has to be estimated. The system is assumed to be driven by a
random disturbance produced from the Gaussian white noise sequence by an
unknown shaping filter. The worst-case performance of an estimator
is quantified by the maximum ratio of the root-mean-square (RMS) value of
the estimation error to that of the disturbance over stationary
Gaussian disturbances whose mean anisotropy is bounded from above
by a given parameter $a \>  0$. The mean anisotropy is a combined
entropy theoretic measure of temporal colouredness and spatial
``nonroundness'' of a signal. We construct an $a$-anisotropic estimator
which minimizes the worst-case error-to-noise RMS ratio. The estimator retains the
general structure of the Kalman filter, though with modified state-space
matrices. Computing the latter is reduced to solving a set of two
coupled algebraic Riccati equations and an equation involving the
determinant of a matrix. In two limiting cases, where $a = 0$ or $a
\to +\infty$, the $a$-anisotropic estimator leads to the
standard steady-state Kalman filter or the $H_{\infty}$-optimal
estimator, respectively.
\end{abstract}
\thispagestyle{empty}

\section{Introduction}

We consider a robust filtering problem for a finite-dimensional linear discrete time invariant (LDTI) system which generates two output signals. One of the signals is directly observed, while the other is unknown and has to be estimated by filtering the observation through a causal LDTI estimator. The underlying  system is driven by an external disturbance which is assumed to be a stationary Gaussian sequence. The latter is produced from a white noise sequence with zero mean and identity covariance matrix by an unknown LDTI shaping filter whose transfer function belongs to the Hardy space $H_2$.

For a given estimator and a given noise shaping filter, the influence of the disturbance on the estimation error can be quantified by the ratio of the root-mean-square (RMS) values of these random sequences. The aim of the estimator is to minimize this influence. Suppose the choice of a noise shaping filter is at the disposal of a hypothetical opponent whose aim is to maximize the  error-to-noise  RMS ratio. If the set of strategies of this opponent is the whole space $H_2$, then the worst damage to the estimator performance (in terms of the error-to-noise RMS ratio) which the opponent can achieve coincides with the $H_{\infty}$-norm of the \emph{error operator} $\Delta$ relating the estimation error to the disturbance. This effect is quantitatively the same as in a qualitatively different situation where the estimator performance is measured by the $\ell_2$-gain of the error operator with respect to nonrandom disturbances in the form of arbitrary square summable sequences. The latter setting is studied, for example, in \cite{NK_1991,PS_1999,S_1990,XDF_1991,XDW_1994}. On the other hand, if the opponent is allowed to produce only Gaussian white noise disturbances with scalar covariance matrices (that is, diagonal matrices with all equal diagonal entries), then the above mentioned RMS ratio reduces to the scaled $H_2$-norm $\|\Delta\|_2/\sqrt{m}$ of the error operator, where $m$ is the dimension of the disturbance.

Consider an intermediate situation where the opponent is restricted to produce disturbances whose \emph{mean anisotropy} is bounded from above by a given nonnegative parameter $a$ which quantifies the amount of uncertainty in the probability law of the noise.  The mean anisotropy is a combined entropy theoretic measure of temporal colouredness (that is, predictability) and spatial ``nonroundness'' of a stationary Gaussian sequence. The corresponding set of noise shaping filters is a cone in $H_2$. The worst-case performance of an estimator is then quantified by the \emph{$a$-anisotropic norm} of the error operator defined as the maximum error-to-noise RMS ratio with respect to the class of stationary Gaussian disturbances whose mean anisotropy does not exceed $a$.

The present paper is concerned with an \emph{$a$-anisotropic optimal filtering problem} of finding an estimator so as to minimize the $a$-anisotropic norm of the error operator. We derive equations for such estimator assuming the existence of a saddle point  in this stochastic minimax problem. The estimator retains the general structure of the Kalman filter, though with modified state-space matrices. Computing the latter is reduced to solving a set of two coupled discrete time algebraic Riccati equations (DAREs) and an equation involving the determinant of a matrix. In two limiting cases, where $a = 0$ and $a \to +\infty$, the $a$-anisotropic estimator becomes the standard steady-state Kalman filter \cite{AM_1979} and the $H_{\infty}$-optimal estimator, respectively. A numerical solution of this set of nonlinear equations can be implemented in the form of a homotopy algorithm, which is analogous to \cite{Mariton} and  employs the smooth parameter dependence of the stabilizing solutions of DAREs \cite{Ran1} and the vectorization of matrices \cite{Magnus}. The homotopy algorithm for the filtering problem, considered in the present paper, and its convergence will be discussed elsewhere.

The mean anisotropy of stationary Gaussian sequences and the anisotropic norm of LDTI systems were introduced in \cite{VKS_1995a}.    An account of their properties and applications to performance analysis of control systems can be found in \cite{DVKS_2001,VKS_1996a,VKS_1999}. The anisotropy-based optimization approach, which we follow in this paper, was proposed in \cite{SVK_1994,VKS_1995b} and then applied to optimal control  design for LDTI systems in \cite{VKS_1996b}. It pursues the aim of constructing controllers and estimators which would be more robust than the $H_2$-optimal ones  and less conservative than their $H_{\infty}$-counterparts. In this regard, the anisotropy-based approach is not dissimilar to  the mixed $H_2/H_{\infty}$-control \cite{DZJB_1994,HBM_1991,KRB_1996,ZGBD_1994} or to the minimum entropy $H_{\infty}$-control \cite{Mustafa} approaches where an internally stabilizing controller is sought to minimize the $H_2$-norm or, respectively, the entropy functional \cite{Arov}, of the closed-loop system under a given $H_{\infty}$-norm bound on the system. However, an important feature of our approach is that it deals with stochastic minimax  settings based on a single parameter-dependent norm which incorporates the standard $H_2$ and $H_{\infty}$ performance criteria as limiting cases.

The paper is organized as follows. In Section~\ref{preliminaries},
the mean anisotropy  and the anisotropic norm are defined and their
basic properties are outlined for convenience.
Section~\ref{problemstatement} formulates the $a$-anisotropic
optimal filtering problem. In Section~\ref{saddlepoint}, a
sufficient saddle point condition is provided for optimality of
an estimator in this problem. Section~\ref{finitedimensional}
specifies finite-dimensional estimators and noise shaping
filters among which the saddle point is being sought. In
Section~\ref{worstfilter}, equations are obtained for a worst-case
noise shaping filter against a given finite-dimensional estimator.
Section~\ref{weightedestimator} derives equations for the weighted $H_2$-optimal
estimator against a finite-dimensional shaping filter, and these
results are combined in Section~\ref{finalequations} which summarizes a set of algebraic equations for finding an $a$-anisotropic optimal estimator.

\section{Mean anisotropy of signals and $a$-anisotropic norm of systems}
\label{preliminaries}

In what follows, $V := (v_k)_{k \in \Z}$ denotes an $m$-dimensional Gaussian white noise sequence (of independent Gaussian random vectors in $\R^m$ with zero mean and the identity covariance matrix):
$$
    \bE  v_k = 0,
    \qquad
    \cov(v_j, v_k) = \delta_{jk} I_m,
    \qquad
    j, k \in \Z.
$$
Here, $\bE(\cdot)$ is the expectation, $\cov(\cdot, \cdot)$ is the covariance matrix, $\Z$ denotes the set of integers, $\delta_{jk}$ is the Kronecker delta, and $I_m$ is the identity  matrix of order $m$. Consider an $m$-dimensional stationary Gaussian sequence $W := (w_k)_{k \in \Z} := GV$ generated from $V$ by a causal LDTI shaping filter $G$ with an $\R^{m\x m}$-valued impulse response $g:= (g_k)_{k\in \Z_+}$ as the convolution of the latter with $V$:
\begin{equation}\label{wj}
        w_j := \sum_{k = 0}^{+\infty} g_k v_{j-k},
        \qquad
        j \in \Z.
\end{equation}
As a linear input-output operator (which maps $V$ to $W$), the filter $G$ is identified   with its $\C^{m\x m}$-valued transfer
function
$$
        G(z) := \sum_{k = 0}^{+\infty} z^k g_k,
        \qquad
        z \in \C,
$$
which is assumed to be in the Hardy space $H_2^{m\x  m}$ in order to ensure the
convergence of the series (\ref{wj}) in the mean square sense (and
hence,  with probability one in the Gaussian case). That is, the transfer function $G$ is
analytic in the open unit disc $\{z \in \C:\ |z| < 1\}$ of the
complex plane and has finite $H_2$-norm
\begin{equation}
\label{H2}
        \|G\|_2
        :=
                \sqrt{\frac{1}{2\pi}
                \int_{-\pi}^{\pi}
                \Tr \big(\wh{G}(\omega)
                \wh{G}(\omega)^*\big) \rd\omega}
        =
        \sqrt{
                \sum_{k = 0}^{+\infty}
                \Tr (g_k g_k^{\rT})}
        =
        \sqrt{\bE  (|w_0|^2)},
\end{equation}
where
\begin{equation}
\label{Ghat}
    \wh{G}(\omega)
    :=
    \sum_{k = 0}^{+\infty}
    \re^{ik\omega}g_k,
    \qquad
    \omega \in [-\pi,\pi],
\end{equation}
is the Fourier transform of the impulse response $g$. The quantity on the right-hand side of (\ref{H2}) is the RMS value of the sequence $W=GV$. The \emph{mean anisotropy} \cite{VKS_1995a}  of the sequence $W$ is computed as
\begin{equation}\label{meananiso}
        \overline{A}(G)
        =
        -\frac{1}{4\pi}
        \int_{-\pi}^{\pi}
        \ln\det
        \left(
        \frac{m
        \wh{G}(\omega)\wh{G}(\omega)^*}{\|G\|_2^2}
        \right)
        \rd\omega.
\end{equation}
This functional takes non-negative finite values for full rank shaping
filters $G$ (that is, satisfying $\det \wh{G}(\omega) \ne 0$ for
almost all $\omega \in [-\pi,\pi]$), and $\overline{A}(G):= + \infty$
otherwise. The mean anisotropy (\ref{meananiso}) is representable
as the sum of two nonnegative terms
\begin{equation}
\label{AA}
    \overline{A}(G)
    =
    \underbrace{-\frac{1}{2}
    \ln\det
    \left(\frac{m\,\cov(w_0)}{\bE  (|w_0|^2)}\right)}_{\rm nonroundness}
    +
    \underbrace{
    \frac{1}{2}
    \ln\det \left(\cov(w_0) \left(\cov\left(w_0 \mid (w_k)_{k < 0}\right)\right)^{-1}\right)}_{\rm colouredness},
\end{equation}
where $\cov(\cdot \mid \cdot)$ denotes the conditional covariance matrix (which is nonrandom in the Gaussian case being considered).
Here, the first term is zero only for scalar covariance matrices $\cov(w_0) = \lambda I_m$, with $\lambda > 0$, which correspond to isotropic Gaussian distributions in $\R^m$. The second term on the right-hand side of (\ref{AA}) is Shannon's mutual information \cite{CT_1991} between $w_0$ and the past history $(w_k)_{k < 0}$ of the Gaussian sequence $W$ and is closely related to the Szego-Kolmogorov formula \cite{R_1990}. This term vanishes  if and only if $w_0$ and $(w_k)_{k < 0}$ are statistically independent. Therefore, the mean anisotropy functional (\ref{meananiso}) is a combined entropy theoretic measure of \emph{spatial nonroundness} and \emph{temporal colouredness} (that is, predictability) of the stationary Gaussian sequence $W = GV$. In particular, $\overline{A}(G) = 0$ if and only if $W$ is a zero mean Gaussian white noise sequence with a \emph{scalar} covariance matrix.

Despite the relatively simple structure (\ref{meananiso}) and (\ref{AA}) of the mean anisotropy (due to which this functional can be calculated using state-space formulas \cite{VKS_1996a}), the right-hand side of (\ref{meananiso}) was obtained in \cite{VKS_1995a} as the following limit
\begin{equation}
\label{bD}
    \overline{A}(G) = \lim_{N\to +\infty} \frac{\bD_N(G)}{N}
\end{equation}
which constitutes the original definition of the mean anisotropy.
Here, $\bD_N(G)$ denotes the relative entropy \cite{CT_1991} of the probability distribution of the  $Nm$-dimensional normalised  random vector $\frac{W_N}{|W_N|}$ with respect to the uniform distribution over the unit sphere in $\R^{Nm}$, with $W_N:= (w_k)_{0\< k<N}$ denoting a fragment of the Gaussian sequence $W = GV$. It is the limit relation (\ref{bD}) that motivates the term ``mean anisotropy'' for the deviation from Gaussian white noise sequences with scalar covariance matrices. Indeed, the fragments of the latter sequences have \emph{isotropic} Gaussian distributions (which are invariant under the group of rotations), and the corresponding  normalised vectors are uniformly distributed over the unit spheres, in which case $\bD_N=0$ for any $N=1,2,3,\ldots$.

Now, let $F$ be a causal LDTI system with an $m$-dimensional input $W$ and an $r$-dimensional output $Z := (z_k)_{k \in \Z} =  FW$. Suppose its transfer function belongs to the Hardy space $H_{\infty}^{r\x  m}$, that is, the function is analytic in the open unit disc of the complex plane and has finite $H_{\infty}$-norm
$$
        \|F\|_{\infty}
        :=
        \sup_{|z| < 1} \sigma_{\max}(F(z))
        =
        \esssup_{\omega \in [-\pi,\pi]}
        \sigma_{\max}(\wh{F}(\omega)).
$$
Here, $\sigma_{\max}(\cdot)$ denotes the largest
singular value of a matrix, and $\wh{F}$ is the Fourier transform of the corresponding impulse response, in accordance with (\ref{Ghat}).  For a given $a \>  0$, the
\emph{$a$-anisotropic norm} \cite{VKS_1995a} of $F$ is defined by
$$
        \sn F\sn_a
        :=
        \sup_{W = GV:\, G \in \G_a}
        \sqrt{\frac{\bE  (|z_0|^2)}{\bE  (|w_0|^2)}} =
        \sup_{G \in \G_a}
        \frac{
        \|FG\|_2}{\|G\|_2},
$$
where
\begin{equation}\label{Ga}
        \G_a
        :=
        \big\{G \in H_2^{m\x  m}:\ \overline{A}(G) \<
        a\big\}
\end{equation}
is the set of shaping filters with the mean anisotropy (\ref{meananiso}) not exceeding the  threshold $a$. For any system $F \in H_{\infty}^{r\x  m}$, its $a$-anisotropic norm $\sn F\sn_a$ is a nondecreasing concave function of $a \>  0$ satisfying
\begin{equation}\label{range}
        \frac{\|F\|_2}{\sqrt{m}}
        =
        \sn F\sn_0 \<  \lim_{a \to +\infty} \sn F\sn_a =
        \|F\|_{\infty}.
\end{equation}

\section{Anisotropy-based optimal filtering problem}
\label{problemstatement}

Let $Z := (z_k)_{k \in \Z}$ be an unknown $r$-dimensional signal
which is to be estimated by using the measurements of a directly observed $p$-dimensional
signal $Y := (y_k)_{k \in \Z}$. Suppose these sequences are produced at
the output of  an LDTI system $F$ with an $n$-dimensional internal
state  $X := (x_k)_{k \in \Z}$ driven by an $m$-dimensional external
disturbance $W := (w_k)_{k \in \Z}$ according to the state-space equations
\begin{equation}\label{runxyz}
    \begin{bmatrix}
    x_{k+1}\\
    y_k\\
    z_k
    \end{bmatrix}
    =
    \begin{bmatrix}
    A & B\\
    C & D\\
    \Phi & \Psi
    \end{bmatrix}
    \begin{bmatrix}
    x_k\\
    w_k
    \end{bmatrix},
\end{equation}
where $A,B,C, D, \Phi, \Psi$ are appropriately dimensioned real
matrices. For what follows, we assume that $A$ is asymptotically
stable (that is, its spectral radius satisfies $\rho(A)<1$) and $D$ is of full row rank. The fact that the system $F$ has
the state-space representation (\ref{runxyz}) will be written as
\begin{equation}\label{FFF}
    \begin{bmatrix}
    Y \\
    Z
    \end{bmatrix}
    =
    \begin{bmatrix}
    F_1 W \\
    F_2 W
    \end{bmatrix}
    =
    F W,
    \qquad
    F =
    \begin{bmatrix}
    F_1 \\
    F_2
    \end{bmatrix}
    =
    \left[
    \begin{array}{c|c}
    A & B\\
    \hline
    C & D\\
    \Phi & \Psi
    \end{array}
    \right],
\end{equation}
where the subsystems
$$
    F_1
    :=
    \left[
    \begin{array}{c|c}
    A & B\\
    \hline
    C & D
    \end{array}
    \right],
    \qquad
    F_2
    :=
    \left[
    \begin{array}{c|c}
    A & B\\
    \hline
    \Phi & \Psi
    \end{array}
    \right]
$$
share the common state $X$ and map their common input $W$ to
the outputs $Y$ and $Z$, respectively. Let the observation $Y$ be
processed by an \emph{estimator} $E$ which is a causal LDTI system with an
$r$-dimensional output
\begin{equation}
\label{Zhat}
    \wh{Z} := (\wh{z}_k)_{k \in \Z} = EY .
\end{equation}
The corresponding sequence of estimation errors
\begin{equation}
\label{Ztilde}
    \wt{Z} :=
(\wt{z}_k)_{k \in \Z} := Z - \wh{Z} = \Delta(E)W
\end{equation}
is
the output of the system
\begin{equation}\label{Delta}
   \Delta(E) :=  F_2 - E F_1
\end{equation}
which we will refer to as the \emph{error operator}. In what follows, an estimator $E$ is said to be \emph{admissible} if the corresponding error operator satisfies $\Delta(E) \in H_{\infty}^{r\x  m}$. The set of admissible estimators for the system $F$ is denoted by $\E$. For a given $a \>  0$, we formulate the \emph{$a$-anisotropic optimal filtering problem} as the minimization of the $a$-anisotropic norm of the error operator (\ref{Delta}) over admissible estimators:
\begin{equation}\label{aniopt}
    {\rm minimize}
    \quad
    \sn \Delta(E)\sn_a
    :=
    \sup_{G \in \G_a}
    \frac{\|\Delta(E)G\|_2}{\|G\|_2}
    \quad
    {\rm over}\
    E \in \E,
\end{equation}
where $\G_a$ is the class of noise shaping filters given by (\ref{Ga}). This setting, which follows the anisotropy-based optimization approach \cite{SVK_1994,VKS_1995b}, is depicted in Fig.~\ref{fig1}.
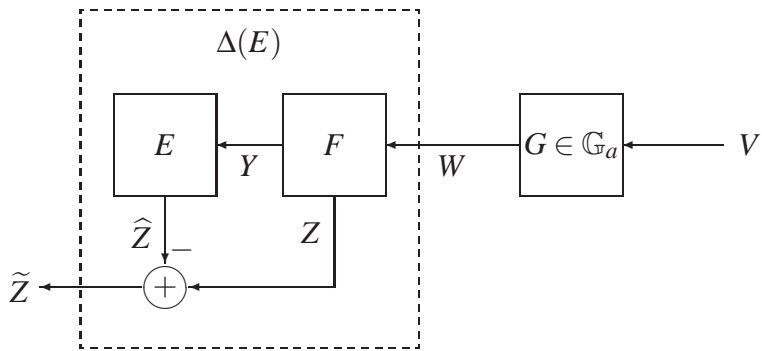
\begin{figure}[htbp]
\begin{center}
\unitlength=0.9mm
\linethickness{0.4pt}
\begin{picture}(110.00,100.00)

\put(75,62.5){\framebox(15,15)[cc]{$G \in \G_a$}}

\put(40,62.5){\framebox(15,15)[cc]{$F$}}
\put(35,85){\makebox(0,0)[cc]{$\Delta(E)$}}

\put(40,70){\vector(-1,0){10}}
\put(15,62.5){\framebox(15,15)[cc]{$E$}}
\put(47.5,62.5){\line(0,-1){13.5}}
\put(47.5,49){\vector(-1,0){21.5}}
\put(22.5,49){\makebox(0,0)[cc]{$+$}}
\put(25,54.5){\makebox(0,0)[cc]{$-$}}

\put(22.5,49){\circle{7}}
\put(22.5,62.5){\vector(0,-1){10}}
\put(19,49){\vector(-1,0){15}}
\put(1,49){\makebox(0,0)[cc]{$\wt{Z}$}}

\put(19,57){\makebox(0,0)[cc]{$\wh{Z}$}}
\put(44,57){\makebox(0,0)[cc]{$Z$}}

\put(35,67){\makebox(0,0)[cc]{$Y$}}
\put(65,67){\makebox(0,0)[cc]{$W$}}

\put(75,70){\vector(-1,0){20}}

\put(105,70){\vector(-1,0){15}}

\put(109,70){\makebox(0,0)[cc]{$V$}}

\put(10,40){\dashbox(50,50)[cc]{}}

\end{picture}
\vskip-3cm
\caption{The $a$-anisotropic optimal filtering problem.}
\label{fig1}
\end{center}
\end{figure}
If $a = 0$, then (\ref{aniopt})
coincides with the standard $H_2$-optimal filtering problem in view of the
left-most equality in (\ref{range}). On the other hand, the
limit on the right-hand side  of (\ref{range}) suggests
that,  for large values of the mean anisotropy level $a$, the problem (\ref{aniopt}) approaches the
$H_{\infty}$-optimal filtering problem. Note that, irrespective of whether
the minimum in the problem (\ref{aniopt}) is achievable, the quantity $\inf_{E \in \E} \sn
\Delta(E)\sn_a$ is a nondecreasing concave function of $a \>  0$ as
the lower envelope of such functions.

\section{Saddle-point condition of optimality}
\label{saddlepoint}

For a given mean anisotropy level $a \>  0$ of the disturbance $W$ and a given admissible estimator $E \in \E$, we denote by
\begin{equation}\label{worstnoise}
        \G_a^{\diamond}(E)
        :=
        \left\{G \in \G_a\ {\rm maximizing}\  \frac{\|\Delta(E) G\|_2}{\|G\|_2}\right\}
\end{equation}
the corresponding set of worst-case noise shaping filters. Furthermore, for any given noise shaping filter $G \in H_2^{m\x  m}$, let
\begin{equation}\label{bestestimator}
        \E^{\diamond}(G)
        :=
        \big\{
        E \in \E\
        {\rm minimizing}\
        \|\Delta(E) G\|_2\big\}
\end{equation}
denote the set of mean square optimal estimators which minimize the RMS value of the estimation errors in (\ref{Ztilde}) for the disturbance $W = GV$, thereby solving the weighted $H_2$-optimal filtering problem. The following lemma is similar to \cite[Lemma~1]{VKS_1996b}.

\begin{lemma}
\label{saddle}
Suppose $ E_* \in \E^{\diamond}(G_*)$ and $ G_* \in
\G_a^{\diamond}(E_*)$. Then the estimator $E_*$ is a solution of the $a$-anisotropic optimal filtering
problem (\ref{aniopt}).\hfill$\square$
\end{lemma}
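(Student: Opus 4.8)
The plan is to run the classical saddle-point argument, just as in \cite[Lemma~1]{VKS_1996b}: a pair at which each player's choice is a best response to the other's is automatically a minimax point. It is convenient to write $J(E,G) := \|\Delta(E)G\|_2/\|G\|_2$ for $E \in \E$ and $G \in H_2^{m\x m}$ with $G \neq 0$, so that $\sn \Delta(E)\sn_a = \sup_{G \in \G_a} J(E,G)$ for every admissible $E$ by the definition in (\ref{aniopt}).

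First I would restate the two hypotheses in terms of $J$. Since $\Delta(E_*) \in H_{\infty}^{r\x m}$ and $G_* \in H_2^{m\x m}$ (in particular $G_* \neq 0$, since otherwise the ratio in (\ref{worstnoise}) is undefined), the product $\Delta(E_*)G_*$ lies in $H_2^{r\x m}$ and all quantities below are finite. The assumption $G_* \in \G_a^{\diamond}(E_*)$ gives $G_* \in \G_a$ together with $J(E_*,G_*) = \sup_{G \in \G_a} J(E_*,G) = \sn \Delta(E_*)\sn_a$. The assumption $E_* \in \E^{\diamond}(G_*)$ says $E_*$ minimizes $\|\Delta(E)G_*\|_2$ over $E \in \E$; because the factor $\|G_*\|_2$ is a fixed positive constant, this is the same as minimizing $J(\cdot,G_*)$, i.e. $J(E_*,G_*) = \inf_{E \in \E} J(E,G_*)$.

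Combining these, for an arbitrary admissible estimator $E \in \E$ one has the chain
$$\sn \Delta(E_*)\sn_a = J(E_*,G_*) = \inf_{E' \in \E} J(E',G_*) \le J(E,G_*) \le \sup_{G \in \G_a} J(E,G) = \sn \Delta(E)\sn_a,$$
where the last inequality uses $G_* \in \G_a$. Taking the infimum over $E \in \E$ on the right-hand side shows that $E_*$ attains $\inf_{E \in \E} \sn \Delta(E)\sn_a$ and hence solves (\ref{aniopt}).

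I do not expect a genuine obstacle: the argument is purely order-theoretic and uses neither convexity/concavity of $J$, nor compactness, nor the existence of a saddle point for general pairs $(E,G)$ — the lemma only asserts that \emph{if} a matched pair $(E_*,G_*)$ happens to exist, it is optimal for the outer minimization. The only care needed is the bookkeeping already noted: finiteness of $\|\Delta(E)G_*\|_2$ for admissible $E$ (from the estimate $\|\Delta(E)G_*\|_2 \le \|\Delta(E)\|_{\infty}\|G_*\|_2$), the fact that $G_* \neq 0$ so that $J(\cdot,G_*)$ is well-defined, and the observation that minimizing $\|\Delta(E)G_*\|_2$ over $E$ is equivalent to minimizing the ratio $J(E,G_*)$ since the denominator does not depend on $E$.
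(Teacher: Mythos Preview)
Your proof is correct and follows essentially the same route as the paper: both establish the chain $\sn\Delta(E_*)\sn_a = J(E_*,G_*) \le J(E,G_*) \le \sn\Delta(E)\sn_a$ for arbitrary $E\in\E$, using first that $G_*$ is worst-case against $E_*$ and then that $E_*$ is best against $G_*$. The paper's version is terser, while you add some well-definedness bookkeeping ($G_*\neq 0$, finiteness of $\|\Delta(E)G_*\|_2$), but the argument is the same.
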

\begin{proof}
In view of (\ref{worstnoise}) and (\ref{bestestimator}), the pair $(E_*,G_*)$, described in the lemma, is a
saddle point of the minimax problem (\ref{aniopt}). Hence, the relations
$$
    \sn \Delta(E) \sn_a
    \>
    \frac{\|\Delta(E) G_*\|_2}{\| G_*\|_2}
    \>
    \frac{\|\Delta(E_*) G_*\|_2}{\| G_*\|_2}
    =
    \sn \Delta(E_*) \sn_a
$$
hold for any admissible  estimator $E \in \E$, whereby the $a$-anisotropic norm of the error operator in (\ref{Delta}) can not be made smaller than that delivered by $E_*$.
\end{proof}

\section{Finite-dimensional estimators and noise shaping filters}
\label{finitedimensional}

In what follows, we will use an auxiliary set of matrices
\begin{equation}\label{setK}
    \K
    :=
    \big\{K \in \R^{n\x  p}:\ \rho(A-KC) < 1\big\}
\end{equation}
associated with the matrices $A$ and $C$ in (\ref{runxyz}).
Since $A$ is asymptotically stable, then $\K$ is an open subset of $\R^{n\x  p}$ which contains the zero matrix. For any $K \in \K$ and $M \in \R^{r\x  p}$, consider an estimator $E_{K,M}$ with an $n$-dimensional internal state $\wh{X} := (\wh{x}_k)_{k \in \Z}$ and the output $\wh{Z}$ in (\ref{Zhat}) governed by
\begin{equation}\label{EKMxz}
    \begin{bmatrix}
        \wh{x}_{k+1}\\
        \wh{z}_k
    \end{bmatrix}
    =
    \begin{bmatrix}
        A & K\\
        \Phi & M
    \end{bmatrix}
    \begin{bmatrix}
        \wh{x}_k\\
        y_k - C\wh{x}_k
    \end{bmatrix}
    =
    \begin{bmatrix}
        A-KC & K\\
        \Phi-MC & M
    \end{bmatrix}
    \begin{bmatrix}
        \wh{x}_k\\
        y_k
    \end{bmatrix},
\end{equation}
that is,
\begin{equation}\label{EKM}
        E_{K,M}
        =
        \left[
        \begin{array}{c|c}
        A - K C & K\\
    \hline
        \Phi-MC & M
        \end{array}
        \right].
\end{equation}
Note that the estimator $E_{K,M}$ has the structure of a steady-state Kalman
filter with gain matrices $K,M$. By introducing the sequence
\begin{equation}\label{diffX}
        \wt{X}
        :=
        (\wt{x}_k)_{k\in \Z}
        :=
        X - \wh{X}
\end{equation}
and using (\ref{runxyz}), it follows that
\begin{equation}
\label{yCx}
    y_k - C \wh{x}_k
    =
    C\wt{x}_k + D w_k .
\end{equation}
Substitution of (\ref{yCx}) into
(\ref{EKMxz}) leads to the following equations for $\wt{X}$ in (\ref{diffX}) and the estimation error
sequence $\wt{Z}$ in (\ref{Ztilde}):
\begin{equation}
\label{tildexz}
    \begin{bmatrix}
        \wt{x}_{k+1} \\
        \wt{z}_k
    \end{bmatrix}
    =
  \begin{bmatrix}
    A & B \\
    \Phi & \Psi
  \end{bmatrix}
  \begin{bmatrix}
    x_k \\
    w_k
  \end{bmatrix}
    -
    \begin{bmatrix}
        A & K\\
        \Phi & M
    \end{bmatrix}
    \begin{bmatrix}
        \wh{x}_k\\
        C\wt{x}_k + Dw_k
    \end{bmatrix}
    =
    \begin{bmatrix}
        A-KC & B-KD\\
        \Phi-MC & \Psi-MD
    \end{bmatrix}
    \begin{bmatrix}
        \wt{x}_k\\
        w_k
    \end{bmatrix}.
\end{equation}
Therefore, the error operator $\Delta(E)$ in (\ref{Delta}), which corresponds to the estimator $E:= E_{K,M}$ in
(\ref{EKM}), has the state-space representation
\begin{equation}\label{DKM}
    \Delta_{K,M}
    :=
        \Delta(E_{K,M})
        =
        \left[
        \begin{array}{c|c}
        A - K C & B-KD\\
    \hline
        \Phi -MC & \Psi-MD
        \end{array}
        \right]
\end{equation}
with the internal state $\wt{X}$ given by (\ref{diffX}). In view of (\ref{setK}), the condition $K \in \K$ ensures asymptotic stability of the system  $\Delta_{K,M}$ and hence, the admissibility of the estimator: $E_{K,M} \in \E$.
Now, with any matrix $K\in \K$, we associate the set
\begin{equation}\label{setLK}
    \L_K
     :=
    \big\{L \in \R^{m\x  n}:\
    \rho(A-KC + (B-KD)L) < 1\big\}
\end{equation}
which is an open subset of
$\R^{m\x  n}$ containing the zero matrix.  In what follows, $\S$ denotes the set of real positive definite
symmetric  matrices of order $m$. Consider
a noise shaping
filter $G_{K,S,L}$, which is parameterized by the matrices
\begin{equation}\label{KSL}
    K \in \K,
    \qquad
    S \in \S,
    \qquad
    L \in \L_K
\end{equation}
and produces a disturbance $W$ to the
system (\ref{runxyz}) as
\begin{equation}\label{shapew}
        w_k =  L \wt{x}_k + \sqrt{S} v_k,
\end{equation}
where $\sqrt{S}\in \S$ is the matrix square root of $S$,
and $\wt{X}$ is related by (\ref{diffX}) to the internal states $X$ and $\wh{X}$  of the underlying system (\ref{runxyz}) and the estimator (\ref{EKM}). This particular noise generation scenario is depicted in Fig.~\ref{fig2}.
\begin{figure}[htbp]
\begin{center}
\unitlength=0.9mm
\linethickness{0.4pt}
\begin{picture}(110.00,100.00)

\put(40,62.5){\framebox(15,15)[cc]{$F$}}
\put(40,70){\vector(-1,0){10}}
\put(15,62.5){\framebox(15,15)[cc]{$E_{K,M}$}}
\put(22.5,75.5){\line(0,1){15.5}}
\put(22.5,75.5){\makebox(0,0)[cc]{$\bullet$}}
\put(22.5,91){\vector(1,0){21.5}}
\put(47.5,62.5){\line(0,-1){13.5}}
\put(47.5,49){\vector(-1,0){21.5}}
\put(22.5,49){\makebox(0,0)[cc]{$+$}}
\put(25,54.5){\makebox(0,0)[cc]{$-$}}

\put(22.5,49){\circle{7}}
\put(22.5,62.5){\vector(0,-1){10}}
\put(19,49){\vector(-1,0){15}}
\put(1,49){\makebox(0,0)[cc]{$\wt{Z}$}}

\put(19,57){\makebox(0,0)[cc]{$\wh{Z}$}}
\put(44,57){\makebox(0,0)[cc]{$Z$}}

\put(19,84){\makebox(0,0)[cc]{$\wh{X}$}}
\put(44,84){\makebox(0,0)[cc]{$X$}}
\put(57,95){\makebox(0,0)[cc]{$\wt{X}$}}

\put(35,67){\makebox(0,0)[cc]{$Y$}}
\put(60,67){\makebox(0,0)[cc]{$W$}}

\put(47.5,75.5){\vector(0,1){12}}
\put(47.5,75.5){\makebox(0,0)[cc]{$\bullet$}}
\put(47.5,91){\makebox(0,0)[cc]{$+$}}
\put(42.3,93.5){\makebox(0,0)[cc]{$-$}}
\put(47.5,91){\circle{7}}
\put(51,91){\vector(1,0){12.5}}

\put(63.5,86){\framebox(10,10)[cc]{$L$}}
\put(68.5,70){\circle{7}}
\put(68.5,70){\makebox(0,0)[cc]{$+$}}
\put(68.5,86){\vector(0,-1){12.5}}
\put(65,70){\vector(-1,0){10}}

\put(82,70){\vector(-1,0){10}}
\put(82,65){\framebox(10,10)[cc]{$\sqrt{S}$}}
\put(107,70){\vector(-1,0){15}}

\put(111,70){\makebox(0,0)[cc]{$V$}}

\end{picture}
\vskip-3.5cm
\caption{The structure of the noise shaping filter $G_{K,S,L}$.}
\label{fig2}
\end{center}
\end{figure}
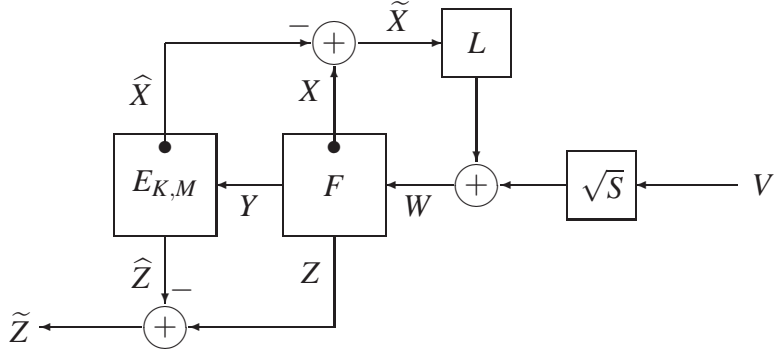
Substitution of (\ref{shapew}) into (\ref{tildexz}) leads to
\begin{align*}
        \wt{x}_{k+1}
         & =
        (A-KC)\wt{x}_k + (B-KD) w_k
          =
        (A-KC + (B-KD) L)\wt{x}_k + (B-KD) \sqrt{S}v_k,
\end{align*}
and hence, the noise shaping filter under consideration has the following state-space realization
\begin{equation}\label{GKSL}
        G_{K,S,L}
        =
        \left[
        \begin{array}{c|c}
        A-KC+(B-KD)L & (B-KD) \sqrt{S}\\
    \hline
    L & \sqrt{S}
        \end{array}
        \right]
\end{equation}
with the internal state $\wt{X}$ in (\ref{diffX}). The conditions (\ref{KSL}) on the
matrices $K, S, L$ imply that the filter $G_{K,S,L}$ and its inverse
\begin{equation}\label{invGKSL}
        G_{K,S,L}^{-1}
        =
        \left[
        \begin{array}{c|c}
        A-KC & B-KD\\
    \hline
    -S^{-1/2} L & S^{-1/2}
        \end{array}
        \right]
\end{equation}
are both asymptotically stable.
Therefore, $G_{K,S,L} \in H_2^{m\x  m}$ is a full rank noise shaping
filter which generates a stationary Gaussian disturbance $W=G_{K,S,L}V$ with a finite mean anisotropy $\overline{A}(G_{K,S,L}) < + \infty$.

\section{Worst-case noise shaping filter}
\label{worstfilter}

For any matrix pair $(K,M) \in \K\x  \R^{r\x  p}$, which specifies an admissible estimator $E_{K,M}$ in (\ref{EKM}), we associate with the
error operator $\Delta_{K,M}$ in (\ref{DKM}) a positive quantity
\begin{equation}\label{qmax}
    \theta_{K,M} := \|\Delta_{K,M}\|_{\infty}^{-2}
\end{equation}
which is a continuous function on the open
set $\K\x  \R^{r\x  p}$. For any $q \in [0,\theta_{K,M})$,
consider the following DARE with respect to a matrix $Q \in \R^{n\x  n}$:
\begin{align}
\label{dare1Q}
    Q
    & =
    (A-KC)^{\rT} Q(A-KC) + q (\Phi-MC)^{\rT} (\Phi-MC) +
    L^{\rT} S^{-1} L,\\
\label{dare1S}
    S
    & :=
    (I_m  -  (B-KD)^{\rT} Q (B-KD)-q (\Psi-MD)^{\rT} (\Psi-MD))^{-1},\\
\label{dare1L}
    L
    & :=
    S ((B-KD)^{\rT} Q (A-KC)+q (\Psi-MD)^{\rT} (\Phi-MC)).
\end{align}
A solution $Q$ of this equation will be called \emph{admissible} if it is symmetric and positive semi-definite, and $(S,L) \in \S\x  \L_K$. By the discrete bounded real lemma \cite{DeSouza} (see also \cite[Theorem~4.6.6 on p.~71]{SIG_1998}), for every $q \in [0,\theta_{K,M})$, the DARE (\ref{dare1Q})--(\ref{dare1L}) has a unique admissible solution. We will denote this solution and the associated matrices on the left-hand sides  of (\ref{dare1S}) and (\ref{dare1L}) by
\begin{equation}
\label{QSL}
    Q  =\bfQ(K,M,q),
    \qquad
    S  =  \bfS(K,M,q),
    \qquad
    L  =  \bfL(K,M,q).
\end{equation}
By a straightforward verification, if $q=0$ then,  for any $(K,M) \in
\K \x  \R^{r\x  p}$, these matrices reduce to
\begin{equation}\label{bfQSL0}
    \bfQ(K,M,0)   =   0,
    \qquad
    \bfS(K,M,0)   =   I_m,
    \qquad
    \bfL(K,M,0)   =   0.
\end{equation}
Moreover, by using the results of \cite{Ran1}, it can be shown
that the maps $\bfQ, \bfS, \bfL$ in (\ref{QSL}) are Frechet differentiable
on the set
\begin{equation}\label{setU}
    \U
    :=
    \big\{(K,M,q):\
    K \in \K,\
    M \in \R^{r\x  p},\
    0 \<  q < \theta_{K,M}
    \big\}
\end{equation}
which is defined in terms of (\ref{setK}), (\ref{DKM}) and (\ref{qmax}).

\begin{lemma}
\label{WCSF}
Suppose the matrices $S$ and $L$ are associated with the admissible solution of the DARE (\ref{dare1Q})--(\ref{dare1L}):
$$
    (K,M,q) \in \U,
    \qquad
    S = \bfS(K,M,q),
    \qquad
    L = \bfL(K,M,q).
$$
Then (\ref{GKSL}) describes a worst-case noise shaping filter against the
estimator (\ref{EKM}) in the sense of (\ref{worstnoise}). That is,
$$ G_{K,S,L} \in \G_a^{\diamond}(E_{K,M}),
$$
with
\begin{equation}\label{anorm}
        a = -\frac{1}{2} \ln\det \left(\frac{m S}{\Tr (L P L^{\rT}+S)}\right),
        \qquad
        \sn \Delta_{K,M} \sn_a
        =
        \sqrt{\frac{1}{q} \left(1-\frac{m}{\Tr (LPL^{\rT} +
        S)}\right)},
\end{equation}
where $P := \cov(\wt{x}_0)$ is the covariance matrix of the sequence (\ref{diffX}) under the
noise generation scenario $W = G_{K,S,L}V$.\hfill$\square$

\end{lemma}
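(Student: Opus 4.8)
The plan is to show that the filter $G_{K,S,L}$ of (\ref{GKSL}), with $S,L$ coming from the admissible solution of the DARE (\ref{dare1Q})--(\ref{dare1L}), simultaneously (i) lies in $\G_a$ with $a$ given by the first formula in (\ref{anorm}), and (ii) attains the supremum in the definition of the $a$-anisotropic norm, with $\sn\Delta_{K,M}\sn_a$ given by the second formula. A convenient route is via the Lagrangian / duality characterization of $\sn\cdot\sn_a$ that accompanies the mean anisotropy: for a fixed system $\Delta$, maximizing $\|\Delta G\|_2^2/\|G\|_2^2$ subject to $\overline A(G)\<a$ leads, by a Lagrange multiplier argument, to looking for a filter $G$ whose power spectral density $\Sigma(\omega)=\wh G(\omega)\wh G(\omega)^*$ satisfies a fixed-point relation of the form $\Sigma^{-1}=\tau(I_m-q\,\wh\Delta^*\wh\Delta)$ for some scalars $q>0$, $\tau>0$, where $q$ plays the role of the multiplier attached to the norm constraint. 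So the first step is to record this characterization (it is the discrete-time counterpart of the construction in \cite{VKS_1995a} and is exactly what Lemma~\ref{saddle} is designed to exploit): an admissible $G$ of this form with $\overline A(G)=a$ is automatically worst-case, and then $\sn\Delta\sn_a^2=\|\Delta G\|_2^2/\|G\|_2^2$.

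The second step is to verify that the state-space filter (\ref{GKSL}) is precisely of this form. Write $\Delta_{K,M}$ in the realization (\ref{DKM}) with matrices $(\bar A,\bar B,\bar C,\bar D):=(A-KC,\,B-KD,\,\Phi-MC,\,\Psi-MD)$. The three DARE relations (\ref{dare1Q})--(\ref{dare1L}) are exactly the equations one gets when performing the "spectral factorization of $I_m-q\,\wh\Delta_{K,M}^*\wh\Delta_{K,M}$": $Q$ is the stabilizing solution of the associated Riccati equation, $S^{-1}=I_m-\bar B^{\rT}Q\bar B-q\bar D^{\rT}\bar D$ is the (inverse of the) constant term, and $L=S(\bar B^{\rT}Q\bar A+q\bar D^{\rT}\bar C)$ is the feedback gain, so that
$$
  I_m - q\,\wh\Delta_{K,M}(\omega)^*\wh\Delta_{K,M}(\omega)
  \;=\;
  \wh{G_{K,S,L}}(\omega)^{-*}\,S^{-1}\,\wh{G_{K,S,L}}(\omega)^{-1}
  \qquad\text{for all }\omega,
$$
which is the discrete bounded real lemma factorization underlying \cite{DeSouza}; the admissibility of $Q$ (i.e. $Q\>0$, $S\in\S$, $L\in\L_K$) is what guarantees that $G_{K,S,L}$ and its inverse (\ref{invGKSL}) are stable, as already noted after (\ref{invGKSL}). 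Hence $\wh{G_{K,S,L}}\wh{G_{K,S,L}}^*=S^{1/2}(I_m-q\wh\Delta_{K,M}^*\wh\Delta_{K,M})^{-1}S^{1/2}$, i.e. $\Sigma^{-1}=S^{-1/2}(I_m-q\wh\Delta_{K,M}^*\wh\Delta_{K,M})S^{-1/2}$, which has exactly the required fixed-point shape with $\tau$ absorbed into $S$. This identifies $G_{K,S,L}$ as a stationary point of the constrained problem for the multiplier value $q$.

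The third step is the bookkeeping that turns the two constants into the closed-form expressions in (\ref{anorm}). Under $W=G_{K,S,L}V$ the state $\wt X$ of (\ref{diffX}) is stationary with covariance $P=\cov(\wt x_0)$ solving the Lyapunov equation $P=(\bar A+\bar B L)P(\bar A+\bar B L)^{\rT}+\bar B S\bar B^{\rT}$, and from (\ref{shapew}) one gets $\cov(w_0)=LPL^{\rT}+S$ while $\bE(|w_0|^2)=\Tr(LPL^{\rT}+S)$; moreover, since $G_{K,S,L}^{-1}$ has the realization (\ref{invGKSL}) with $\bE(|v_0|^2)$ normalized, the one-step prediction covariance of $W$ equals $S$, so the colouredness term in (\ref{AA}) is $\tfrac12\ln\det(\cov(w_0)S^{-1})$ and the nonroundness term is $-\tfrac12\ln\det\big(m\,\cov(w_0)/\bE(|w_0|^2)\big)$; adding them and using $\cov(w_0)=LPL^{\rT}+S$ telescopes the $\det\cov(w_0)$ factors and yields $\overline A(G_{K,S,L})=-\tfrac12\ln\det\big(mS/\Tr(LPL^{\rT}+S)\big)$, which is the first equation in (\ref{anorm}); this is the value of $a$ for which $G_{K,S,L}$ is admissible and binds the constraint. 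For the norm, integrate the factorization identity: $\|\Delta_{K,M}G_{K,S,L}\|_2^2=\tfrac1{2\pi}\int\Tr(\wh\Delta_{K,M}\Sigma\wh\Delta_{K,M}^*)\,\rd\omega$ and, using $q\,\wh\Delta_{K,M}^*\wh\Delta_{K,M}=I_m-S^{1/2}\Sigma^{-1}S^{1/2}$ inside the trace together with $\Tr(\wh\Delta_{K,M}\Sigma\wh\Delta_{K,M}^*)=\Tr(\Sigma\wh\Delta_{K,M}^*\wh\Delta_{K,M})$, one finds $q\|\Delta_{K,M}G_{K,S,L}\|_2^2=\|G_{K,S,L}\|_2^2-\tfrac1{2\pi}\int\Tr(S^{1/2}\Sigma^{-1}S^{1/2}\Sigma)\,\rd\omega$; but $\Tr(S^{1/2}\Sigma^{-1}S^{1/2}\Sigma)=\Tr(S)\cdot\!$(no — more carefully) $\Sigma^{-1}\Sigma=I_m$ gives the integrand $\Tr(S)\cdot$... , i.e. the integrand is the constant $\Tr(I_m)=m$ only after noting $S^{1/2}\Sigma^{-1}S^{1/2}=I_m-q\wh\Delta_{K,M}^*\wh\Delta_{K,M}$ and $\Tr$ of its product with $\Sigma$ requires care; the clean statement is that $\tfrac1{2\pi}\int\Tr(\Sigma^{-1}\Sigma)\rd\omega=m$, so after dividing by $\|G_{K,S,L}\|_2^2=\Tr(LPL^{\rT}+S)$ one obtains $q\,\sn\Delta_{K,M}\sn_a^2=1-m/\Tr(LPL^{\rT}+S)$, which is the second equation in (\ref{anorm}). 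Finally, Lemma~\ref{saddle}'s saddle-point logic (applied in the direction: $G_{K,S,L}$ is worst-case against $E_{K,M}$) upgrades this stationary point to a genuine maximizer, so $G_{K,S,L}\in\G_a^\diamond(E_{K,M})$.

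The main obstacle, and the part deserving the most care, is the second step: confirming that the three algebraic equations (\ref{dare1Q})--(\ref{dare1L}) really do encode the spectral factorization of $I_m-q\wh\Delta_{K,M}^*\wh\Delta_{K,M}$ and that admissibility of $Q$ matches the bounded-real hypothesis $q<\theta_{K,M}=\|\Delta_{K,M}\|_\infty^{-2}$ (so the factor is positive definite a.e. and the factorization is stable and stably invertible). Everything downstream — the identification of $G_{K,S,L}$ as the constrained stationary point, the evaluation of $\overline A$, and the trace computation of the norm — is then essentially forced. One must also be slightly careful that the value of $a$ is not prescribed in advance but is the one read off from $G_{K,S,L}$; the lemma is stated precisely so that this is not a constraint but part of the conclusion.
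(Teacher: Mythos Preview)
Your strategy is sound and is essentially a reconstruction of the argument that the paper merely cites: the paper's own ``proof'' is a one-line reference to \cite[Section~5]{VKS_1996a} and \cite{DVKS_2001}, so you are supplying far more detail than the authors do. The three-step outline (variational characterization of the worst-case spectral density, identification of $G_{K,S,L}$ via the bounded real factorization, and evaluation of $\overline A$ and the ratio) matches what lies behind those references and behind the remark following the lemma that $\Theta$ in (\ref{auxsys}) is inner.

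Two concrete repairs are needed. First, your factorization identity carries a spurious $S^{-1}$: with the paper's normalization (feedthrough $\sqrt{S}$ in (\ref{GKSL})), the inner property $\wh\Theta^*\wh\Theta=I_m$ of (\ref{auxsys}) gives simply
\[
  I_m - q\,\wh\Delta_{K,M}^*\wh\Delta_{K,M} \;=\; \wh{G}_{K,S,L}^{-*}\,\wh{G}_{K,S,L}^{-1},
  \qquad\text{i.e.}\qquad \Sigma^{-1}=I_m-q\,\wh\Delta_{K,M}^*\wh\Delta_{K,M}.
\]
With this in hand the trace computation is immediate,
\[
  q\,\Tr\bigl(\Sigma\,\wh\Delta_{K,M}^*\wh\Delta_{K,M}\bigr)=\Tr\bigl(\Sigma(I_m-\Sigma^{-1})\bigr)=\Tr\Sigma-m,
\]
and integrating yields $q\|\Delta_{K,M}G_{K,S,L}\|_2^2=\|G_{K,S,L}\|_2^2-m$, which is exactly the second formula in (\ref{anorm}); the extra $S$ factors you inserted are what produced the ``no --- more carefully'' detour.

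Second, the appeal to Lemma~\ref{saddle} at the end is misplaced: that lemma takes $G_*\in\G_a^{\diamond}(E_*)$ as a \emph{hypothesis}, so it cannot be used to promote a stationary point to a maximizer. What actually closes the argument is convexity: after normalizing so that $\frac{1}{2\pi}\int\Tr\Sigma\,\rd\omega=1$, you are maximizing the linear functional $\Sigma\mapsto\frac{1}{2\pi}\int\Tr(\wh\Delta_{K,M}^*\wh\Delta_{K,M}\,\Sigma)\,\rd\omega$ over the convex set $\{-\frac{1}{2\pi}\int\ln\det(m\Sigma)\,\rd\omega\le 2a\}$, and the KKT condition $\Sigma^{-1}=\tau(I_m-q\,\wh\Delta_{K,M}^*\wh\Delta_{K,M})$ together with the binding constraint $\overline A(G_{K,S,L})=a$ (your first formula in (\ref{anorm})) is then \emph{sufficient} for global optimality. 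State this explicitly in place of the Lemma~\ref{saddle} invocation.
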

\begin{proof}
The assertion of the lemma is a corollary from the results of \cite[Section 5]{VKS_1996a} (see also \cite[Lemmas~5 and 6, Theorem~4]{DVKS_2001}).
\end{proof}

In order to provide an additional insight into the structure of the filter $G_{K,S,L}$ described in
Lemma~\ref{WCSF}, we note that, in view of (\ref{invGKSL}), the system
\begin{equation}\label{auxsys}
        \Theta
        :=
        \begin{bmatrix}
        \sqrt{q} \Delta_{K,M}\\
        G_{K,S,L}^{-1}
        \end{bmatrix}
        =
        \left[
        \begin{array}{c|c}
        A-KC & B-KD\\
        \hline
        \sqrt{q} (\Phi-MC) & \sqrt{q}(\Psi-MD)\\
        -S^{-1/2} L & S^{-1/2}
        \end{array}
        \right]
\end{equation}
is inner, that is, its transfer function satisfies $\wh{\Theta}(\omega)^* \wh{\Theta}(\omega) = I_m$ for all $\omega \in [-\pi,\pi]$. Hence, under the noise generation scenario $W = G_{K,S,L}V$, the variance of the corresponding estimation errors in (\ref{Ztilde}) are related to that of the  disturbance $W$ by
$$
    q \bE (|\wt{z}_0|^2) + m = \bE  (|w_0|^2) = \Tr (L
    PL^{\rT} + S).
$$
Also note that the solution $Q$ of the Riccati equation (\ref{dare1Q})--(\ref{dare1L}) is  the observability gramian of the auxiliary system $\Theta$ in (\ref{auxsys}).

\section{Weighted $H_2$-optimal estimator}
\label{weightedestimator}

Using (\ref{setK}) and (\ref{setLK}), we will now introduce the set
$$ \L :=
\bigcup_{K \in \K} \L_K.
$$
For any matrix pair $(S,L) \in \S \x  \L$ in (\ref{shapew}),
consider the following DARE with respect to  a matrix $P \in \R^{n\x  n}$:
\begin{align}
\label{dare2P}
    P & =
    (A+BL) P (A+BL)^{\rT} + BS B^{\rT} - K T K^{\rT},\\
\label{dare2T}
    T
    & :=
    (C+DL)P(C+DL)^{\rT} + DS D^{\rT},\\
\label{dare2K}
    K & := \!\!
    ((A+BL)P(C+DL)^{\rT} + BS D^{\rT})T^{-1}.
\end{align}
A solution $P$ of this DARE will be called \emph{admissible} if
it is symmetric and positive semi-definite, and $\L_K \ni L$. Such a
solution, when it exists, is unique  \cite{Lancaster}. With the
admissible solution $P$, we associate the matrix
\begin{equation}\label{matM}
    M := ((\Phi+\Psi L) P (C+DL)^{\rT} + \Psi S
    D^{\rT}) T^{-1}.
\end{equation}
For what follows, the admissible solution $P$ and the associated matrices on the left-hand sides  of (\ref{dare2T})--(\ref{matM}) are denoted by
$$
    P =  \bfP(S,L),
    \qquad
    T  =  \bfT(S,L),
    \qquad
    K  =  \bfK(S,L),
    \qquad
    M  =  \bfM(S,L).
$$

\begin{lemma}
\label{bestEKM}
Suppose the matrices $K$ and $M$ are associated with the admissible solution of the DARE (\ref{dare2P})--(\ref{dare2K}) and (\ref{matM}):
$$
    K = \bfK(S,L),
    \qquad
    M = \bfM(S,L).
$$
Also, let  $K \in
\K$. Then (\ref{EKM}) describes a mean square optimal estimator against the
noise shaping filter (\ref{GKSL}) in the sense of (\ref{bestestimator}),
that is,
\begin{equation}
\label{EKMbest}
    E_{K,M} \in \E^{\diamond}(G_{K,S,L}).
\end{equation}
\hfill $\square$
\end{lemma}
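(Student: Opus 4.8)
The plan is to recognize the weighted $H_2$-optimal filtering problem \eqref{bestestimator} as a standard stationary Kalman filtering problem for the signal model obtained by \emph{absorbing} the shaping filter $G_{K,S,L}$ into the plant $F$. Concretely, the disturbance is $W = G_{K,S,L}V$ with $V$ a Gaussian white noise, and from \eqref{shapew} the noise is generated by $w_k = L\wt{x}_k + \sqrt{S}v_k$. One should first form the cascade of $F$ in \eqref{runxyz} with the shaping filter; since both share the internal state through $\wt{X}$ in \eqref{diffX}, the augmented plant driven by the white noise $V$ has state transition matrix $A+BL$ (acting on the relevant state), observation matrix $C+DL$, signal-to-estimate matrix $\Phi+\Psi L$, and white-noise input matrices $B\sqrt{S}$ (into the state), $D\sqrt{S}$ (into $Y$), $\Psi\sqrt{S}$ (into $Z$). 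This is exactly the data appearing in \eqref{dare2P}--\eqref{matM}: the process-noise covariance is $BSB^{\rT}$, the measurement-noise covariance is $DSD^{\rT}$, the cross-covariance is $BSD^{\rT}$, and the additional cross-terms with $\Phi+\Psi L$ and $\Psi S D^{\rT}$ in \eqref{matM} are precisely those needed to estimate $Z$ (as opposed to the state).

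Next I would invoke the classical solution of the stationary (infinite-horizon) $H_2$/minimum-variance estimation problem: among all causal LDTI estimators, the one minimizing the RMS estimation error is the steady-state Kalman filter, whose gain is the $K$ determined by the stabilizing (admissible) solution $P$ of the filtering DARE \eqref{dare2P}--\eqref{dare2K}, and whose output gain for reconstructing $Z$ is the $M$ in \eqref{matM}. The matrix $P = \bfP(S,L)$ is then the stationary covariance of the one-step state prediction error; the hypothesis $K\in\K$ guarantees $\rho(A-KC)<1$, so the resulting estimator $E_{K,M}$ is admissible in the sense of Section~\ref{problemstatement}. It then remains to verify that this Kalman filter coincides, as an input-output operator, with $E_{K,M}$ of \eqref{EKM}: writing out the Kalman recursion $\wh{x}_{k+1} = A\wh{x}_k + K(y_k - C\wh{x}_k)$, $\wh{z}_k = \Phi\wh{x}_k + M(y_k - C\wh{x}_k)$ reproduces exactly \eqref{EKMxz}, hence \eqref{EKMbest}.

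For rigor one should either (i) appeal directly to a standard reference for the stationary Kalman filter as the $H_2$-optimal causal estimator (e.g.\ \cite{AM_1979}) after checking the stabilizability/detectability conditions implied by $\rho(A+BL)$-related assumptions together with $(S,L)\in\S\x\L$, or (ii) give a short self-contained orthogonality argument: the estimation error $\wt{Z}$ produced by \eqref{EKM} is, by \eqref{tildexz}--\eqref{DKM}, driven by the innovations process, which is white; any competing admissible estimator differs from $E_{K,M}$ by a causal stable operator applied to the innovations, and by the projection theorem in $\ell_2$ such a perturbation can only increase $\bE(|\wt{z}_0|^2)$. The main obstacle is bookkeeping rather than conceptual: one must carefully align the state coordinates of the augmented plant (where the shaping filter's state is identified with $\wt{X}$) so that the DARE \eqref{dare2P}--\eqref{matM} is seen to be \emph{literally} the Kalman filtering DARE for that augmented model, and confirm that the admissibility requirement ($P\succeq 0$ symmetric, $L\in\L_K$) picks out the stabilizing solution that makes the orthogonality/optimality argument valid. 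Once this identification is in place, optimality and the coincidence with \eqref{EKM} are immediate.
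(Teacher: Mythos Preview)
Your overall strategy---recognize the problem as stationary Kalman filtering for the cascade $F\circ G_{K,S,L}$ driven by white noise $V$---is the right one, and it is also the paper's route. However, there is a genuine gap in the dimensional bookkeeping that is more than cosmetic.

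The cascade has a $2n$-dimensional state $(x_k,s_k)$, where $s_k$ is the internal state of $G_{K,S,L}$ in \eqref{GKSL}; these are \emph{not} the same vector, so the claim that ``the augmented plant\ldots has state transition matrix $A+BL$'' on a single $n$-dimensional state is not correct. More concretely, the Kalman predictor for your proposed $n$-dimensional model (dynamics $A+BL$, observation $C+DL$) is
\[
\wh{x}_{k+1}=(A+BL)\wh{x}_k+K\bigl(y_k-(C+DL)\wh{x}_k\bigr)=(A-KC+(B-KD)L)\wh{x}_k+Ky_k,
\]
which does \emph{not} coincide with the $E_{K,M}$ recursion $\wh{x}_{k+1}=(A-KC)\wh{x}_k+Ky_k$ unless $L=0$. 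So the last step of your plan (``writing out the Kalman recursion\ldots reproduces exactly \eqref{EKMxz}'') fails as stated.

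What the paper does to close this gap: since $w_k=L\wt{x}_k+\sqrt{S}v_k$ with $\wt{x}_k=x_k-\wh{x}_k$ and $\wh{x}_k$ is $Y_{k-1}$-measurable, the conditional mean $\bE(w_k\mid Y_{k-1})=L(\xi_k-\wh{x}_k)$ involves \emph{both} the Kalman predictor $\xi_k=\bE(x_k\mid Y_{k-1})$ and the $E_{K,M}$-state $\wh{x}_k$. This is why the conditional covariance in \eqref{covxyz} yields exactly the $n$-dimensional DARE \eqref{dare2P}--\eqref{dare2K} (so $K_*=K$, $M_*=M$ by uniqueness), while the optimal filter $E_*$ in \eqref{E*} is a priori $2n$-dimensional with state $(\xi_k,\wh{x}_k)$. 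The identification $E_*=E_{K,M}$ then requires the state-space collapse of Lemma~\ref{simple}, using $\alpha_{11}+\alpha_{12}=\alpha_{22}$ with $\alpha_{11}=A-KC+(B-KD)L$, $\alpha_{12}=-(B-KD)L$, $\alpha_{22}=A-KC$. This collapse is the missing piece in your plan; once you add it (or equivalently verify directly that $\xi_k=\wh{x}_k$), your innovations/orthogonality alternative (ii) also goes through.
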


\begin{proof}
This lemma will be proved by showing that, under its
assumptions, the output $\wh{Z}$ of the estimator $E_{K,M}$
reproduces the sequence $\cZ := (\zeta_k)_{k \in \Z}$ (that is,
$\wh{Z} = \cZ$) of conditional expectations
\begin{equation}\label{meanz}
    \zeta_k = \bE  (z_k \mid Y_k).
\end{equation}
The latter are computed according to the noise generation scenario $W = G_{K,S,L} V$ described by (\ref{runxyz}), (\ref{EKMxz}), (\ref{diffX}) and (\ref{shapew}) (see also Fig.~\ref{fig2}) and are, therefore, mean square optimal  in the sense of minimizing the RMS value $\sqrt{\bE(|z_k-\zeta_k|^2)}$ of the corresponding estimation errors in this scenario. Here, for any $k \in \Z$, we denote by $Y_k$ the $\sigma$-algebra of events generated by the past history $ (y_j)_{j \<  k} $ of the observation signal $Y$ available at the $k$th moment of time (that is, $(Y_k)_{k \in \Z}$ is the natural filtration of $Y$). Now, by introducing a sequence $\Xi := (\xi_k)_{k \in \Z}$ of the system state predictors
\begin{equation}\label{meanx}
        \xi_k := \bE  (x_k \mid Y_{k-1}),
\end{equation}
it follows that $\Xi$ and $\cZ$ satisfy the standard Kalman filtering equations (see, for example, \cite{AM_1979,LS_1977}):
\begin{equation}\label{generalfilter}
    \begin{bmatrix}
        \xi_{k+1}\\
        \zeta_k
    \end{bmatrix}
     =
    \bE
    \left(
    \left.
    \begin{bmatrix}
        x_{k+1}\\
        z_k
    \end{bmatrix}\,
    \right|\,
    Y_{k-1}
    \right)
    +
    \begin{bmatrix}
        K_*\\
        M_*
    \end{bmatrix}
    (y_k - \bE (y_k \mid Y_{k-1})),
\end{equation}
where the matrices $K_* \in \R^{n\x  p}$ and $M_* \in
\R^{r\x  p}$ are given by
\begin{align}
\label{P*}
    P_* & :=  \cov(x_{k+1} \mid Y_k)
     =
    \cov(x_{k+1} \mid Y_{k-1})
    -
    K_* T_* K_*^{\rT},\\
\label{T*}
    T_*
    & :=
    \cov(y_k \mid Y_{k-1}),\\
\label{K*}
    K_*
    & :=
    \cov(x_{k+1}, y_k \mid Y_{k-1})\,
    T_*^{-1},\\
\label{M*}
    M_*
    & :=
    \cov(z_k, y_k \mid Y_{k-1})\,
    T_*^{-1}.
\end{align}
Here, the conditional covariance matrices are nonrandom  and time invariant
since all the random sequences being considered are jointly Gaussian and stationary. The second equality in (\ref{P*}) follows from the Lemma on
Normal Correlation \cite{LS_1977}.  The equations (\ref{runxyz}),
(\ref{EKMxz}) and (\ref{shapew}) imply the inclusion
\begin{equation}\label{YV}
    Y_k \subset V_k
\end{equation}
for any $k \in \Z$,
where $V_k$ denotes the $\sigma$-algebra of events generated by the
past history $(v_j)_{j \<  k}$ of the Gaussian white noise sequence $V$
at the $k$th moment of time.  For every $k \in \Z$, the state prediction
error
$$
    \eta_k := x_k - \xi_k,
$$
associated with (\ref{meanx}), is a
$V_{k-1}$-measurable random vector, independent of the
$\sigma$-algebra $Y_{k-1}$.  Hence, by using (\ref{YV}) and (\ref{P*}), it follows that
\begin{equation}\label{Ecovetav}
    \bE
   \left(
    \left.
    \begin{bmatrix}
    \eta_k\\
    v_k
    \end{bmatrix}\,
    \right|\,
    Y_{k-1}
    \right)
   = 0,
    \qquad
    \cov
    \left(
    \left.
    \begin{bmatrix}
    \eta_k\\
    v_k
    \end{bmatrix}\,
    \right|\,
    Y_{k-1}
    \right)
    =
    \begin{bmatrix}
    P_* & 0 \\
    0 & I_m
    \end{bmatrix}.
\end{equation}
Note that, in view of (\ref{EKMxz}), the random vector $\wh{x}_k$ is
$Y_{k-1}$-measurable, which, in combination with  (\ref{shapew}) and the first of
the equalities (\ref{Ecovetav}), implies
that
$$
    \bE (w_k \mid Y_{k-1}) = L (\xi_k - \wh{x}_k).
$$
From the latter relationship and
from  (\ref{runxyz}), it follows that
\begin{align}
\label{Exyz}
    \bE
    \left(
    \left.
    \begin{bmatrix}
    x_{k+1}\\
    y_k\\
    z_k
    \end{bmatrix}\,
    \right|\,
    Y_{k-1}
    \right)
    & =
    \begin{bmatrix}
    A & B\\
    C & D\\
    \Phi & \Psi
    \end{bmatrix}
    \begin{bmatrix}
    \xi_k\\
    L (\xi_k-\wh{x}_k)
    \end{bmatrix},\\
\nonumber    \\
\label{covxyz}
    \cov
    \left(
    \left.
    \begin{bmatrix}
    x_{k+1}\\
    y_k\\
    z_k
    \end{bmatrix}\,
    \right|\,
    Y_{k-1}
    \right)
    & =
    \begin{bmatrix}
    A+BL & B\\
    C+DL & D\\
    \Phi+\Psi L & \Psi
    \end{bmatrix}
    \begin{bmatrix}
    P_* & 0 \\
    0 & S
    \end{bmatrix}
    \begin{bmatrix}
    A+BL & B\\
    C+DL & D\\
    \Phi+\Psi L & \Psi
    \end{bmatrix}^{\rT}.
\end{align}
Now, a combination of (\ref{Exyz}) with (\ref{generalfilter}) leads to
\begin{align*}
    \begin{bmatrix}
        \xi_{k+1}\\
        \zeta_k
    \end{bmatrix}
    & =
    \begin{bmatrix}
    A+BL & -BL\\
    \Phi+\Psi L & -\Psi L
    \end{bmatrix}
    \begin{bmatrix}
        \xi_k\\
        \wh{x}_k
    \end{bmatrix}
    +
    \begin{bmatrix}
    K_*\\
    M_*
    \end{bmatrix}
    (y_k - (C+DL)\xi_k + DL
    \wh{x}_k)
\end{align*}
which, along  with (\ref{EKMxz}),  implies that $\cZ = E_* Y$,
where the estimator $E_*$ has the state-space realization
\begin{equation}\label{E*}
    E_*
    =
    \left[
    \begin{array}{cc|l}
    A-K_*C +  (B-K_*D)L & -(B-K_* D)L & K_*\\
    0 & A-KC & K\\
    \hline
    \Phi- M_*C + (\Psi-M_* D)L &  - (\Psi-M_* D) L & M_*
    \end{array}
    \right]
\end{equation}
with a $2n$-dimensional internal state $(\Xi, \wh{X})$. By substituting
the covariance relations (\ref{covxyz}) into (\ref{P*})--(\ref{K*}), it follows
that the matrix $P_*$ with necessity satisfies the DARE
(\ref{dare2P})--(\ref{dare2K}). Moreover, the estimator
$E_*$  produces the sequences (\ref{meanz}) and (\ref{meanx})  if and only if
$P_*$ is an admissible solution of the Riccati equation. Hence, in view of
the above mentioned uniqueness of such solution, $P_* = \bfP(S,L),\
K_* = \bfK(S,L),\ M_* = \bfM(S,L)$ which, under the assumptions of the
lemma, implies that $K_* = K,\ M_* = M$. In order to complete the proof of the lemma, we will
need the following technical result.

\begin{lemma}
\label{simple}
Suppose $\alpha_{11}, \alpha_{12}, \alpha_{22} \in
\R^{n\x  n}$ are three matrices such that $\alpha_{11}$ and $\alpha_{22}$ are asymptotically
stable, and
$$
    \alpha_{11} + \alpha_{12} = \alpha_{22}.
$$
Then
for any $\beta \in \R^{n\x  p}$,\ $\gamma_1, \gamma_2 \in
\R^{r\x  n}$ and $\delta \in \R^{r\x  p}$, the following state-space
realizations determine the same input-output
operator:
$$
    \left[
    \begin{array}{cc|c}
    \alpha_{11} & \alpha_{12} & \beta\\
    0 &\alpha_{22} & \beta\\
    \hline
    \gamma_1 & \gamma_2 & \delta
    \end{array}
    \right]
    =
    \left[
    \begin{array}{c|c}
    \alpha_{22} & \beta\\
    \hline
    \gamma_1 + \gamma_2 & \delta
    \end{array}
    \right].
$$
\hfill $\square$
\end{lemma}
Now, by recalling (\ref{EKM}) and applying Lemma~\ref{simple} to
(\ref{E*}), it follows that $\Xi = \wh{X}$, $\cZ = \wh{Z}$
and $E_* = E_{K,M}$.  Therefore, the estimator, described in Lemma~\ref{bestEKM}, indeed satisfies (\ref{EKMbest}), which completes the proof.
\end{proof}

\section{Equations for $a$-anisotropic  optimal  estimator}
\label{finalequations}

The following theorem combines the results of Sections~\ref{saddlepoint}--\ref{weightedestimator} and provides a set of equations for finding an optimal estimator in the $a$-anisotropic filtering problem.
\begin{theorem}
\label{main}
Suppose the matrices $K \in \K$,\ $M \in \R^{r\x  p}$,\ $S \in \S$
and $L \in \L_K$
satisfy the equations
\begin{equation}\label{KMSL}
        K  =  \bfK(S,L),\qquad
        M  =  \bfM(S,L),\qquad
        S  =  \bfS(K,M,q),\qquad
        L  =  \bfL(K,M,q),
\end{equation}
where $q \in [0, \theta_{K,M})$. Then the estimator
(\ref{EKM}) is a solution of the $a$-anisotropic optimal
filtering problem (\ref{aniopt}), with the mean anisotropy level $a$ and the $a$-anisotropic norm $\sn \Delta_{K,M}\sn_a$ given by (\ref{anorm}),
where $P = \bfP(S,L)$. Here, the maps $\bfP$, $\bfK$, $\bfM$ and
$\bfS$, $\bfL$ are associated with the DAREs (\ref{dare2P})--(\ref{dare2K})
and (\ref{dare1Q})--(\ref{dare1L}), respectively, and the function $\theta_{K,M}$
is defined by (\ref{qmax}).\hfill $\square$
\end{theorem}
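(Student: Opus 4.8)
The plan is to assemble Theorem~\ref{main} directly from the saddle-point principle of Lemma~\ref{saddle} together with the two structural lemmas, Lemma~\ref{WCSF} and Lemma~\ref{bestEKM}. The key observation is that the four equations in (\ref{KMSL}) say precisely that the quadruple $(K,M,S,L)$ is a simultaneous fixed point of the two DARE-based maps: the pair $(S,L)$ is the worst-case shaping-filter data produced by Lemma~\ref{WCSF} against the estimator $E_{K,M}$, and the pair $(K,M)$ is the mean-square-optimal estimator data produced by Lemma~\ref{bestEKM} against the shaping filter $G_{K,S,L}$.

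First I would verify the hypotheses needed to invoke each lemma. From $K\in\K$, $M\in\R^{r\x p}$ and $q\in[0,\theta_{K,M})$ we get $(K,M,q)\in\U$, so that Lemma~\ref{WCSF} applies and yields $G_{K,S,L}\in\G_a^{\diamond}(E_{K,M})$ for the value of $a$ in (\ref{anorm}), with $P=\cov(\wt x_0)=\bfP(S,L)$ under the scenario $W=G_{K,S,L}V$ (this identification of $P$ with $\bfP(S,L)$ is exactly the content of the DARE (\ref{dare2P})--(\ref{dare2K}) for the state covariance of the realization (\ref{GKSL}) fed into (\ref{tildexz}), so it is consistent to use the same symbol $P$ in both lemmas). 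Conversely, since $S\in\S$ and $L\in\L_K\subset\L$, we have $(S,L)\in\S\x\L$, and the added hypothesis $K\in\K$ together with $K=\bfK(S,L)$, $M=\bfM(S,L)$ lets Lemma~\ref{bestEKM} conclude $E_{K,M}\in\E^{\diamond}(G_{K,S,L})$. Writing $G_*:=G_{K,S,L}$ and $E_*:=E_{K,M}$, we thus have $E_*\in\E^{\diamond}(G_*)$ and $G_*\in\G_a^{\diamond}(E_*)$, which are exactly the hypotheses of Lemma~\ref{saddle}. Applying that lemma gives that $E_{K,M}$ solves (\ref{aniopt}) at the level $a$, and the displayed value of $\sn\Delta_{K,M}\sn_a$ is the one recorded in (\ref{anorm}) of Lemma~\ref{WCSF}.

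The only genuine point requiring care — and the step I expect to be the main obstacle — is the consistency of the shared quantities that appear in both DARE systems, namely that the matrix $P=\bfP(S,L)$ from Lemma~\ref{bestEKM} really does coincide with the covariance $P=\cov(\wt x_0)$ used in the anisotropy formula (\ref{anorm}) of Lemma~\ref{WCSF}, and that the pair $(S,L)$ feeding Lemma~\ref{bestEKM} is the same $(S,L)$ produced by Lemma~\ref{WCSF}. The first is handled by noting that under $W=G_{K,S,L}V$ the error state $\wt X$ obeys the closed-loop recursion $\wt x_{k+1}=(A-KC+(B-KD)L)\wt x_k+(B-KD)\sqrt{S}\,v_k$, whose stationary covariance is governed by a Lyapunov equation equivalent to (\ref{dare2P})--(\ref{dare2K}) after the standard "completion of squares'' that turns the filtering DARE into the covariance recursion for $\cov(x_{k+1}\mid Y_k)$; one must check the conditioning-versus-unconditioned bookkeeping matches. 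The second is automatic once we read (\ref{KMSL}) as a single coupled fixed-point system rather than two separate problems: the same symbols $S,L$ are used throughout, so no reconciliation is needed beyond observing that the hypotheses $S\in\S$, $L\in\L_K$ are exactly the admissibility conditions under which both lemmas are stated.

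Having disposed of that, the proof is short: I would state the quadruple, invoke Lemma~\ref{WCSF} to get the worst-case filter and the expressions for $a$ and the norm, invoke Lemma~\ref{bestEKM} to get the best estimator (using the extra hypothesis $K\in\K$), and then invoke Lemma~\ref{saddle} with $(E_*,G_*)=(E_{K,M},G_{K,S,L})$ to conclude optimality. The Frechet-differentiability remarks around (\ref{setU}) and the homotopy-algorithm discussion play no role in the proof of the theorem itself — they only justify that the fixed-point system (\ref{KMSL}) is amenable to numerical continuation — so I would not bring them in here.
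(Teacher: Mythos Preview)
Your proposal is correct and follows essentially the same route as the paper: invoke Lemma~\ref{WCSF} and Lemma~\ref{bestEKM} to verify the saddle-point hypotheses of Lemma~\ref{saddle} for the pair $(E_{K,M},G_{K,S,L})$, and conclude. Your additional care about identifying $P=\bfP(S,L)$ with $\cov(\wt x_0)$ is a reasonable sanity check but goes beyond what the paper's proof makes explicit; the paper simply cites the three lemmas and stops.
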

\begin{proof}
Under the assumptions of the theorem,
Lemmas~\ref{WCSF} and~\ref{bestEKM} imply that the estimator $E_{K,M}$
and the shaping filter $G_{K,S,L}$ satisfy the conditions of
Lemma~\ref{saddle}, whence the assertion of the theorem follows.
\end{proof}

The equations (\ref{KMSL}) can be written as
$$
    (K,M) = \bfH(K,M,q),
$$
where $\bfH$ is a Frechet differentiable map which is defined on the set $\U$ in (\ref{setU}) and is expressed in terms of $\bfK, \bfM, \bfS, \bfL$. Similarly, the first equality in (\ref{anorm}) determines a  function $\bfA: \U \to \R_+$ in terms of which the equations of Theorem~\ref{main} for the $a$-anisotropic optimal estimator take the form
$$
        (K,M) = \bfH(K,M,q),
        \qquad
        \bfA(K,M,q) = a,
        \qquad
        (K,M,q)
        \in \U.
$$
By the results of \cite{VKS_1996a} (see also \cite{DVKS_2001}), if $\theta_{K,M} \|\Delta_{K,M}\|_2^2  < m$, then $\bfA(K,M,\cdot): [0,\theta_{K,M}) \to \R_+$ is a strictly increasing convex function, with
$$
        \bfA(K,M,0) = 0,
        \qquad
        \left.
        \partial_q \bfA(K,M,q)
        \right|_{q = 0} = 0,
        \qquad
        \left.
        \partial_q^2 \bfA(K,M,q)
        \right|_{q = 0} > 0.
$$
From (\ref{bfQSL0}), it follows that, in the case $a = 0$,  the equations of
Theorem~\ref{main} lead to the the steady-state  gain matrices $K_0, M_0$ of the
standard Kalman filter.


\begin{thebibliography}{99}


\bibitem{AM_1979}
 B.D.O.Anderson,  and J.B.Moore, \emph{Optimal Filtering},
Prentice-Hall, Englewood Cliffs, NJ,  1979.

\bibitem{Arov}
D.Z.Arov, and M.G.Krein, On the evaluation of entropy
functionals and their minima in generalized extension problems,
\emph{Act. Sci.  Mat.}, vol. 45,  1983, pp. 33--50.

\bibitem{CT_1991}
T.M.Cover, and J.A.Thomas, \emph{Elements of Information Theory},
Wiley, New York, 1991.


\bibitem{DeSouza}
C.E.De Souza, and L.Xie, On the discrete-time
bounded real lemma with application in the characterization
of static state feedback $H_{\infty}$-controllers, \emph{Syst. Contr. Lett.}, vol. 18, no.  1,   1992, pp. 61--71.



\bibitem{DVKS_2001}
P.Diamond, I.Vladimirov, A.Kurdjukov, and A.Semyonov,
Anisotropy-based performance analysis of linear
discrete time invariant control systems, \emph{Int. J.
Control}, vol. 74, no. 1, 2001, pp. 28--42.


\bibitem{DZJB_1994}
J.Doyle, K.Zhou, K.Glover, and  B.Bodenheimer, Mixed
$H_2$ and $H_{\infty}$ performance objectives  II: Optimal
control, \emph{IEEE Trans. Automat. Contr.}, vol. 39,
no. 8,  1994,  pp. 1575--1587.




\bibitem{HBM_1991}
W.M.Haddad, D.S.Bernstein, and D.Mustafa, Mixed-norm
$H_2/H_{\infty}$ regulation and estimation: the discrete-time case,
\emph{Syst. Contr. Lett.}, vol. 16, no. 4,  1991, pp. 235--247.











\bibitem{KRB_1996}
   P.P.Khargonekar, M.A.Rotea, and E.Bayens, Mixed $H_2 /
  H_\infty$ filtering, \emph{Int. J. Rob.
  Nonlin. Contr.}, vol. 6,  1996, pp. 313--330.

\bibitem{Lancaster}
P.Lancaster, and L.Rodman, \emph{Algebraic Riccati
equations}, Oxford University Press, New York,  1995.


\bibitem{LS_1977}
R.S.Liptser, and A.N.Shiryaev, \emph{Statistics of
Random Processes}, Vol. 1,  Springer-Verlag, New York, 1977.


\bibitem{Magnus}
 J.R.Magnus, $L$-structured matrices and linear matrix
equations, \emph{Lin. Multilin. Algebra}, vol. 14, 1983, pp.  67--88.

\bibitem{Mariton}
M.Mariton, and P.Bertrand, A homotopy algorithm for
solving coupled Riccati equations, \emph{Optimal Control
Applications  \& Methods}, vol. 6,  1985, pp.  351--357.


\bibitem{Martin}
N.F.G.Martin, and J.W.England, \emph{Mathematical
Theory of Entropy}, Addison-Wesley, London,  1981.


\bibitem{Mustafa}
D.Mustafa, and K.Glover, \emph{Minimum Entropy
$H_{\infty}$-Control},  Springer-Verlag, New York,  1991.

\bibitem{NK_1991}
K.M.Nagpal, and P.P.Khargonekar, Filtering and
smoothing in an $H_{\infty}$-setting, \emph{IEEE Trans.
Automat. Contr.}, vol. 36, no. 2,  1991, pp. 152--166.





 \bibitem{PS_1999}
 I.R.Petersen, and A.V.Savkin, \emph{Robust Kalman Filtering for Signals and
Systems with Large Uncertainties},  Birkhauser, Boston, 1999.



\bibitem{Ran1}
A.C.M.Ran, and L.Rodman, On parameter dependence of
solutions of algebraic Riccati equations, \emph{Math.
Contr. Sign. Syst.}, vol. 1, 1988, pp. 269--284.





\bibitem{R_1990}
Yu.A.Rozanov, \emph{Stationary Random Processes}, Nauka, Moscow, 1990.




\bibitem{SVK_1994}
A.V.Semyonov, I.G.Vladimirov, and A.P.Kurdjukov,
Stochastic approach to $H_{\infty}$-optimization, \emph{Proc. 33rd CDC,
Florida, USA, December 14--16, 1994},  vol. 3,  pp. 2249--2250.

\bibitem{S_1990}
U.Shaked,  $H_{\infty}$-minimum error state estimation of
linear stationary processes, \emph{IEEE Trans. Automat.
Contr.}, vol. 35, no. 5,  1990, pp. 554--558.





\bibitem{SIG_1998}
R.Skelton, T.Iwasaki, and K.Grigoriadis, \emph{A
Unified Algebraic Approach to Linear Control Design}, Taylor
\& Francis, London,  1998.




\bibitem{VKS_1995a}
I.G.Vladimirov, A.P.Kurdjukov, and A.V.Semyonov,
Anisotropy of signals and the entropy of linear stationary
systems, \emph{Doklady Akademii Nauk, Mathematics,} vol.
342, no. 5, 1995, pp. 583--585 (in Russian) (English translation: 1995,
vol. 51, no. 3, pp. 388--390).


\bibitem{VKS_1995b}
I.G.Vladimirov, A.P.Kurdjukov, and A.V.Semyonov,  The
stochastic problem of $H_{\infty}$-optimization, {\it
Doklady Akademii Nauk, Control Theory}, vol. 343, no. 5, 1995,
pp. 607--609 (in Russian) (English translation: 1995, vol. 52,
no. 1, pp. 155--157).

 \bibitem{VKS_1996a}
 I.G.Vladimirov, A.P.Kurdjukov, and A.V.Semyonov,  On
 computing the anisotropic norm of linear
 discrete-time-invariant systems, \emph{Proc.
 13th IFAC World Congress, San-Francisco, California, USA,
 June 30--July 5, 1996}, vol. G, pp. 179--184, Paper IFAC-2d-01.6.

\bibitem{VKS_1996b}
I.G.Vladimirov, A.P.Kurdjukov, and A.V.Semyonov,
State-space solution to anisotropy-based stochastic
$H_{\infty}$-optimization problem, \emph{Proc.
13th IFAC World Congress, San-Francisco, California, USA,
June 30--July 5, 1996}, vol. H,  pp. 427--432, Paper IFAC-3d-01.6.

\bibitem{VKS_1999}
I.G.Vladimirov, A.P.Kurdjukov, and A.V.Semyonov,
Asymptotics of anisotropic norm of linear time invariant
systems, \emph{Avtomatika i Telemekhanika}, vol. 3, 1999, pp. 78--87
(in Russian) (English translation in \emph{Automat. Rem. Contr.}, 1999, vol. 60,
vol. 3, no. 1, pp. 359--366).





\bibitem{XDF_1991}
L.Xie, C.E.de Souza, and M.Fu,
$H_{\infty}$-estimation for discrete-time linear uncertain
systems, \emph{Int. J. Rob. Nonlin.
Contr.}, vol. 1, 1991, pp.  11--23.




\bibitem{XDW_1994}
L.Xie, C.E.de Souza, and Y.Wang,
$H_{\infty}$-filter design for discrete-time uncertain
nonlinear systems, \emph{Proc. 33rd CDC,
Florida, USA, December 14--16, 1994}, vol. 1, pp. 3937--3942.


\bibitem{ZGBD_1994}
K.Zhou, K.Glover, B.Bodenheimer, and J.Doyle, Mixed $H_2$
and $H_{\infty}$ performance objectives I: Robust performance
analysis, \emph{IEEE Trans. Automat. Contr.}, vol. 39, no. 8, 1994,
pp. 1564--1574.





\end{thebibliography}
\end{document}